\documentclass[11pt]{article}
 \usepackage{amsmath}
 \usepackage{amsfonts}
 \usepackage{amssymb}
 \usepackage{amsthm}
 
\usepackage{a4wide}

\theoremstyle{plain}
\newtheorem{theorem}{Theorem}
\newtheorem{lemma}{Lemma}
\newtheorem{proposition}{Proposition}
\newtheorem{corollary}{Corollary}

\theoremstyle{definition}
\newtheorem{definition}{Definition}
\newtheorem{example}{Example}

\theoremstyle{remark}
\newtheorem{remark}{Remark}
 
\newcommand{\LD}{\mathord{\backslash}}
\newcommand{\RD}{\mathord{\slash}}
\newcommand{\PR}{\bullet}
\newcommand{\LDI}{\LD\!\!\LD}
\newcommand{\RDI}{\RD\!\!\RD}
\let\phi\varphi

\begin{document}
\title{Iterative division in the Distributive Full Non-associative Lambek Calculus\thanks{This is a preprint of an article to appear in the proceedings of the 2nd DaL\'i Workshop, Dynamic Logic: New Trends and Applications, Porto 2019, to be published by Springer. Please use and cite the publisher's version. This work is supported by the Czech Science Foundation grant number GJ18-19162Y. The author is grateful to Andrew Tedder and three anonymous reviewers for useful comments.}
}
\author{Igor Sedl\'ar\\[2mm]
The Czech Academy of Sciences, Institute of Computer Science\\
Pod Vod\'arenskou v\v{e}z\'i 271/2, Prague, The Czech Republic\\
\texttt{sedlar@cs.cas.cz}}

\maketitle              
\begin{abstract}
We study an extension of the Distributive Full Non-associative Lambek Calculus with iterative division operators. The iterative operators can be seen as representing iterative composition of linguistic resources or of actions. A complete axiomatization of the logic is provided and decidability is established via a proof of the finite model property.

\end{abstract}

\section{Introduction}
Operators of the Lambek Calculus \cite{Lambek1958} were designed to represent types of linguistic expressions. The ``product'' operator $\PR$ articulates the internal structure of expressions---an expression is of type $A \PR B$ if it is a result of \emph{concatenating} an expression of type $A$ with an expression of type $B$, in that order. The left and right ``division'' operators, $\LD$ and $\RD$, describe the behaviour of expressions under concatenations---an expression $x$ is of type $A \LD B$ if, for all expressions $y$ of type $A$, the concatenation $yx$ is of type $B$; similarly for $B \RD A$ and $xy$. The interpretation of product in terms of concatenation of strings requires the product operation to be associative, which also entails some particular properties of the division operators.

The Lambek operators admit more general interpretations as well. For instance, we may read them in terms of  \emph{merging pieces of information} or \emph{composing linguistic resources}; see \cite[p.\ 350]{Moortgat1996} or \cite[p.\ 10]{Kurtonina1994}. On this interpretation, $A \LD B$ denotes resources $x$ such that if $x$ is composed with an ``input'' resource of type $A$, then the result is a resource of type $B$; similarly $B \RD A$ denotes resources $x$ such that any composition of a resource of type $A$ with ``input'' $x$ will result in a resource of type $B$. Adopting the perspective of arrow logic \cite{Benthem1995,Venema1997}, the division operators can also be seen as describing \emph{composition of actions}. On this interpretation, $A \LD B$ denotes actions $x$ such that any action consisting of performing an action of type $A$ and then performing $x$ is an action of type $B$; $B \RD A$ denotes actions $x$ such that  performing $x$ and then performing an action of type $A$ is always of type $B$.

Assuming either of these interpretations, it is natural to consider \emph{iterated composition} in addition to ``one-off'' composition represented by the Lambek division operators. As an example of iterated composition, take a linguistic resource $y$ and an arbitrary non-empty finite sequence $x_1, \ldots, x_n$ of resources of type $A$, and compose $y$ with input $x_1$, then compose the result with input $x_2$ and so on. With an \emph{iterative} left division operator at hand, say $\LDI$, we could denote by  $A \LDI B$ resources $y$ such that iterated composition of $y$ with any non-empty sequence of resources $x_1, \ldots, x_n$ of type $A$, as indicated above, is guaranteed to result in a resource of type $B$. As another example, take an action $y$ and a finite non-empty sequence $x_1, \ldots, x_n$ of actions of type $A$. Perform $x_1$ after $y$, then perform $x_2$ and so on. With an iterative right division operator at hand, say $\RDI$, we could denote by $B \RDI A$ actions $y$ such that performing any sequence of actions of type $A$ after $y$, as indicated above, is guaranteed to result in an action of type $B$. Iterative composition can be seen as a generalization of action iteration in dynamic logic; the former corresponds to a binary modality while the latter to a unary one. Accordingly, the question concerning the result of iterated composition can be seen as a question concerning ``correctness'' of iterative composition with respect to specific postconditions.

\begin{example}
Suppose we want to know if the core beliefs of some agent, say Ann, concerning some topic of interest, are ``immune'' to fake news concerning the topic. We can rephrase this by asking if exposing Ann consecutively to any number of fake news related to the topic changes her core beliefs concerning the topic. This is a question concerning iterated composition---let $y$ represent Ann's initial belief state, let $A$ denote fake news concerning the given topic and let $B$ denote belief states that support Ann's initial core beliefs concerning the topic; the question is whether combining $y$ consecutively with any $x_1, \ldots, x_n$ of type $A$ results in a belief state of type $B$.
\end{example}

\begin{example}
Suppose we have programmed a robot to perform certain tasks under specific observed conditions, for example re-arranging objects in a warehouse. For security reasons, we want to know if any ``legal'' sequence of the robot's actions can lead to a situation endangering the human workers in the warehouse. Again, we have here an issue pertaining to iterated composition---the question is whether, given some initial action $y$, all sequences of actions of the type ``legal'' performed consecutively after $y$ constitute an action of type ``not endangering the human workers''.
\end{example}

Note that $A \LDI B$, interpreted as above, intuitively corresponds to an infinite conjunction of formulas $A \LD B$, $A \LD (A \LD B)$, $A \LD (A \LD (A \LD B))$ and so on; similarly $B \RDI A$ corresponds to an infinite conjunction of $B \RD A$, $(B \RD A) \RD A$ and so on. On the assumption that composition (product) is associative, the former series of formulas is equivalent to $A \LD B$, $(A \PR A) \LD B$, $\big( A \PR (A \PR A ) \big ) \LD B$ and so on; and similarly for the latter series of formulas. Hence, on the assumption of associativity, iterative division of both kinds can be formalised using a transitive-only version of the ``continuous'' Kleene star opertor familiar from regular languages; see especially Pratt's Action Logic \cite{Pratt1991}, Kozen's $*$-cointinuous action lattices \cite{Kozen1994a} or van Benthem's Dynamic Arrow Logic \cite{Benthem1994}. As ``the Kleene plus'' of $A$, $A^{+}$, intuitively represents $A \lor (A \PR A) \lor \big (A \PR (A \PR A) \big ) \lor \ldots$, ``repeated considerations of information of type $A$ resulting in $B$'' (the first example), i.e.\ $A \LDI B$, can be formalized as $A^{+} \LD B$ and ``repeating $A$-type actions with result $B$'' (the second example), i.e.\ $B \RDI A$, as $B \RD A^{+}$.

However, the assumption of associativity, necessary for the reduction of iterative division to the Kleene operator, is problematic for at least two reasons. Firstly, it has been shown that associativity often leads to undecidability. Buszkowski \cite{Buszkowski2006a} and Palka \cite{Palka2007} have shown that the logic of all $*$-continuous action lattices (the ones where, roughly, Kleene star is equivalent to an infinite disjunction) is undecidable; Kuznetsov \cite{Kuznetsov2019} established recently that the logic of all action lattices is undecidable as well. His proof applies also to Pratt's Action Logic. It follows from the results of \cite{Kurucz1995} that Associative Dynamic Arrow Logic is undecidable. 

Secondly, associativity of composition is problematic on some information-related and channel-theoretic interpretations of the Lambek Calculus; see e.g.\ \cite{Sequoiah2013,Tedder2017} for more details. Lambek \cite{Lambek1961} provides motivation for getting rid of associativity even on the original linguistic interpretation of his calculus. (On this interpretation, the object of study are ``bracketed strings'' over an alphabet---in presence of associativity, brackets can be omitted.)

Thus a natural question arises concerning a formalization of iterative composition in the more general non-associative setting. In this paper we extend the Distributive Non-associative Lambek Calculus with primitive iterative division operators $\LDI$ and $\RDI$, representing the two kinds of iterative composition outlined in the above examples. Whe show that the resulting logic is decidable and we provide a sound and complete axiomatization for it. Our starting point is the relational semantics for the Distributive Non-associative Lambek Calculus in the style of Do\v{s}en \cite{Dosen1992} using a ternary accessibility relation. In Section \ref{sec: models} we use this semantics to give satisfaction clauses for formulas with $\LDI$ and $\RDI$. In Section \ref{sec: comp} we provide a weakly complete axiomatization of the theory of all such models and prove that the theory is decidable. In the concluding Section \ref{sec: conclusion} we discuss variants and extensions of our basic logic and point out some possible directions of future work. 

\textit{Related work.} Bimb\'o and Dunn \cite{Bimbo2005} provide a relational semantics for Pratt's Action Logic (a logic containing versions of the Lambek division operators along with the Kleene star) and for the related logic of Kleene Algebras \cite{Kozen1994}. Non-associative versions of Kleene algebra were studied in \cite{Desharnais2017} with a motivation coming from temporal logic. The iterative division operators studied in the present paper seem to be new, although definable in existing frameworks on the assumption of associativity.

\section{Relational models and iteration}\label{sec: models}

A \emph{Do\v{s}en frame} is $F = \langle S, R\rangle$, where $S$ is a non-empty set (``states'') and $R$ is a ternary relation on $S$. States can be seen, for instance, as linguistic resources. On that interpretation, $Rstu$ means that composing resource $t$ with input $s$ might result in resource $u$ (composition is in general non-deterministic). States can also be seen as actions, in which case $Rstu$ is taken to mean that action $u$ may be decomposed into $s$ followed by $t$ (as in arrow logic) or, more generally, as saying that performing $t$ after $s$ may give $u$.

Fix a denumerable set $Pr$ of propositional variables, intuitively representing some basic features of states. A \emph{Do\v{s}en model} is $M = \langle S, R, V\rangle$ where $V$ is a function from $Pr$ to subsets of $S$; $V(p)$ is the set of states having feature $p$. We say that $\langle S, R, V\rangle$ is a model based on frame $\langle S, R\rangle$.

Various formal languages can be used to express complex features of states. Our basic language is the Lambek language with bounded-lattice connectives $\mathcal{L}_{[\LD, \PR, \RD, \land, \lor, \top, \bot]}$, concisely denoted as $\mathcal{L}_0$, containing zero-ary operators $\top$ (``top'', ``truth'') and $\bot$ (``bottom'', ``falsity'') and binary operators $\LD$ (``left division''), $\PR$ (``product'', ``fusion''), $\RD$ (``right division''), $\land$ (``meet'', ``conjunction'') and $\lor$ (``join'', ``disjunction''). The set of \emph{$\mathcal{L}_0$-formulas} (over $Pr$) is defined as usual. \emph{Sequents} are ordered pairs of formulas. For each $M$, we define the \emph{satisfaction relation $\vDash_M$} between states of the model and formulas as follows:
\begin{itemize}
\item[] $s \vDash_M p$ iff $s \in V(p)$;
\item[] $s \vDash_M \top$ for all $s$; $s \not\vDash_M \bot$ for all $s$;
\item[] $s \vDash_M A \land B$ iff $s \vDash_M A$ and $s \vDash_M B$ ;
\item[] $s \vDash_M A \lor B$ iff $s \vDash_M A$ or $s \vDash_M B$;
\item[] $s \vDash_M A \PR B$ iff there are $t,u$ such that $Rtus$ and $t \vDash_M A$ and $u \vDash_M B$;
\item[] $s \vDash_M A \LD B$ iff, for all $t$ and $u$, if $Rtsu$ and $t \vDash_M A$, then $u \vDash_M B$;
\item[] $s \vDash_M B \RD A$ iff, for all $t$ and $u$, if $Rstu$ and $t \vDash_M A$, then $u \vDash_M B$.
\end{itemize}
A sequent $A \vdash B$ is \emph{valid in $M$} iff $s \vDash_M A$ implies $s \vDash_M B$ for all states $s$ of $M$; $A \vdash B$ is \emph{valid in a frame $F$} iff it is valid in all models based on the frame. For each language $\mathcal{L}$ considered in this paper, the \emph{$\mathcal{L}$-theory} of a class of frames is the set of sequents of $\mathcal{L}$-formulas valid in each frame belonging to the class.

Informally, formulas express types of states; $s \vDash_M A$ is read as ``$s$ is of type $A$ (in $M$)''. Sequents express type dependency; $A \vdash B$ is valid in $M$ iff all states of type $A$ are of type $B$ (in $M$). 

We use the standard \cite{Restall2000} notation
\begin{align*}
Rstuv &:= \exists x (Rstx \And Rxuv) &&
Rs(tu)v := \exists x (Rsxv \And Rtux)
\end{align*}
Let $\bar{x}$ be a finite non-empty list of states $\langle x_1, \ldots, x_n\rangle$, which we call a ``path of length $n$''. We define
\begin{align*}
R\overleftarrow{x}st & :=
\exists y_1, \ldots, y_{n-1} \big ( 
Rx_1sy_1 \land \Big ( \bigwedge_{1 \leq i \leq n-2} Rx_{i+1} y_i y_{i+1} \Big) \land Rx_{n}y_{n-1}t
\big ) \\
R s \overrightarrow{x} t & := 
\exists y_1, \ldots, y_{n-1} \big ( 
Rsx_1y_1 \land \Big ( \bigwedge_{1 \leq i \leq n-2} R y_i x_{i+1} y_{i+1} \Big ) \land Ry_{n-1} x_{n} t
\big )
\end{align*}
(Hence, if $\bar{x} = \langle x_1, x_2\rangle$, then $R \overleftarrow{x}st$ is $\exists y (R x_1 s y \land R x_2 y t )$, i.e.\ $R x_2 (x_1 s) t$, whereas $R s \overrightarrow{x} t$ is $\exists y (R s x_1 y \land R y x_2 t )$, i.e.\ $R s x_1 x_2 t$ .)

We say that $\bar{x} = \langle x_1, \ldots, x_n\rangle$ satisfies $A$ in $M$, notation $\bar{x} \vDash_M A$, iff $x_i \vDash_M A$ for all $i \in \{ 1, \ldots, n \}$. We define
\[ 
A \LD^{\! 1} B := A \LD B \quad 
A \LD^{\! n+1} B := A \LD (A \LD^{\! n} B)  \quad
B \RD^{1} A := B \RD A \quad 
B \RD^{n+1} A := (B \RD^{n} A) \RD A
\]
Note that $A \LD^{\! n} B$ denotes a type of object such that, if $n$ inputs of type $A$ are consecutively combined with the given object, then the result will be of type $B$; $B \RD^{n} A$ has a similar meaning. The iterative division operators we are after can be seen, informally, as corresponding to infinite conjunctions:
\[
\bigwedge_{ 1 \leq n } A \LD^{\! n} B \qquad
\bigwedge_{ 1 \leq n } B \RD^{n} A
\]
Of course, these are not really formulas of our language as conjunction are finite. Hence, we need to express iterative division in some other way.

\begin{lemma}\label{lem: subpath}
Let $\bar{x} = \langle x_1, \bar{z}\rangle$. Then
\begin{itemize}
\item[(a)] $\bar{x} \vDash_M A$ only if $\bar{z} \vDash A$;
\item[(b)] $R \overleftarrow{x} st$ only if $R x_1 s y_1$ and $R \overleftarrow{z} x_1 t$ for some $y_1$;
\item[(c)] $R s \overrightarrow{x} t$ only if $R s x_1 y_1$ and $R x_1 \overleftarrow{z} t$ for some $y_1$.
\end{itemize}
\end{lemma}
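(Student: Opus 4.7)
The plan is to prove the three parts directly from the definitions just above the lemma; none of them requires induction or non-trivial machinery, only careful bookkeeping of indices. Write $\bar{z} = \langle z_1, \ldots, z_{n-1}\rangle$, so that $\bar{x} = \langle x_1, \bar{z}\rangle$ has length $n$ with $z_i = x_{i+1}$. Part (a) is then immediate: $\bar{x} \vDash_M A$ unfolds by definition to $x_j \vDash_M A$ for every $j \in \{1, \ldots, n\}$, which in particular gives $z_i = x_{i+1} \vDash_M A$ for every $i \in \{1, \ldots, n-1\}$; hence $\bar{z} \vDash_M A$.

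For part (b), I expand $R\overleftarrow{x} s t$ using its defining clause. Because $\bar{x} = \langle x_1, z_1, \ldots, z_{n-1}\rangle$, there exist $y_1, \ldots, y_{n-1}$ such that
\[
R x_1 s y_1 \;\land\; \bigwedge_{1 \le i \le n-2} R x_{i+1} y_i y_{i+1} \;\land\; R x_n y_{n-1} t.
\]
Substituting $x_{i+1} = z_i$ and $x_n = z_{n-1}$ converts the final two conjuncts into $R z_1 y_1 y_2 \land \cdots \land R z_{n-1} y_{n-1} t$, which is precisely the matrix of $R\overleftarrow{z} y_1 t$ (start $y_1$, end $t$, inner witnesses $y_2, \ldots, y_{n-1}$). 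Choosing the $y_1$ from the first conjunct therefore yields both $R x_1 s y_1$ and $R\overleftarrow{z} y_1 t$, as required. I note that the statement as printed reads $R\overleftarrow{z}\, x_1\, t$, but this appears to be a typo: the second argument of the residual path relation must be the $y_1$ freshly introduced by the outermost existential, not $x_1$ itself.

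Part (c) is symmetric: unfolding $R s \overrightarrow{x} t$ produces witnesses $y_1, \ldots, y_{n-1}$ with $R s x_1 y_1 \land R y_1 z_1 y_2 \land \cdots \land R y_{n-1} z_{n-1} t$, whose tail is exactly the matrix of $R y_1 \overrightarrow{z} t$ (so the printed $R x_1 \overleftarrow{z} t$ likewise looks like a misprint for $R y_1 \overrightarrow{z} t$). There is no real obstacle in any of this: the lemma is a bookkeeping statement to the effect that a path relation whose left-most element is $x_1$ decomposes into the single $R$-step through $x_1$ plus the path relation associated with the remainder. The only minor care needed is the degenerate case $|\bar{z}| = 1$, where the bounded conjunction in the definition is empty and the conclusion reduces to the clause $R z_1 y_1 t$ alone; this still follows immediately from the unfolding.
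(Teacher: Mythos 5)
Your proof is correct and takes the same route as the paper, which simply unfolds the definitions of $\bar{x}\vDash_M A$, $R\overleftarrow{x}st$ and $Rs\overrightarrow{x}t$ and relabels indices via $z_i = x_{i+1}$; you merely write out the bookkeeping the paper leaves implicit. You are also right that the printed conclusions of (b) and (c) are typos for $R\overleftarrow{z}\,y_1\,t$ and $R\,y_1\overrightarrow{z}\,t$ respectively --- this is confirmed by how the lemma is actually invoked in the proof of the proposition that follows it.
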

\begin{proof}
Item (a) is obvious. (b) follows from the assumption that $z_i = x_{i+1}$ and the definition of $R s \overrightarrow{x} t$. (c) is established similarly.
\end{proof}

\begin{proposition} \mbox{}
\begin{itemize}
\item[(a)] $s \vDash_M A \LD^{\! n} B$ iff, for all paths $ \bar{x} $ of length $n$ and all $t$, if $R \overleftarrow{x} st$ and $\bar{x} \vDash_M A$, then $t \vDash_M B$.
\item[(b)] $s \vDash_M B \RD^{n} A$ iff, for all all paths $ \bar{x} $ of length $n$ and all $t$, if $R s \overrightarrow{x} t$ and $\bar{x} \vDash_M A$, then $t \vDash_M B$.
\end{itemize}
\end{proposition}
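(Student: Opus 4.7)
The plan is to proceed by induction on $n$, handling (a) and (b) in parallel since they are mirror images under swapping the arguments of $R$. For the base case $n=1$, a path of length one is a singleton $\langle x_1\rangle$; unwinding the definitions gives $R\overleftarrow{x}st \Leftrightarrow Rx_1st$ and $\bar x \vDash_M A \Leftrightarrow x_1 \vDash_M A$, so the claimed equivalence is precisely the satisfaction clause for $A \LD B$. Case (b) at $n=1$ is analogous, using $Rsx_1t$ and the clause for $B \RD A$.

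For the inductive step of (a), I would first expand $A \LD^{\! n+1} B = A \LD (A \LD^{\! n} B)$ using the satisfaction clause for $\LD$, then apply the induction hypothesis to the inner $A \LD^{\! n} B$. This yields: $s \vDash_M A \LD^{\! n+1} B$ iff for every $x_1$ with $x_1 \vDash_M A$, every $u$ with $Rx_1su$, every path $\bar z$ of length $n$ with $\bar z \vDash_M A$, and every $t$ with $R\overleftarrow{z}ut$, we have $t \vDash_M B$. The remaining task is to match this with paths of length $n+1$. In the forward direction, given $\bar x = \langle x_1, \bar z\rangle$ of length $n+1$ with $\bar x \vDash_M A$ and $R\overleftarrow{x}st$, Lemma~\ref{lem: subpath} supplies an intermediate $u = y_1$ with $Rx_1su$ and $R\overleftarrow{z}ut$, together with $x_1 \vDash_M A$ and $\bar z \vDash_M A$, and then $t \vDash_M B$ follows from the unfolded form. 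Conversely, any configuration $(x_1,u,\bar z,t)$ in the unfolded form reassembles into a path $\bar x = \langle x_1,\bar z\rangle$ of length $n+1$ with $\bar x \vDash_M A$ and with $R\overleftarrow{x}st$ witnessed by taking $y_1 = u$ together with the witnesses already present in $R\overleftarrow{z}ut$. Part (b) proceeds symmetrically, using Lemma~\ref{lem: subpath}(c) and the fact that in $B \RD^{n+1} A = (B \RD^{n} A) \RD A$ the new path element is appended on the right rather than prepended on the left.

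The main obstacle is bookkeeping rather than conceptual: one has to align the indices of the existentially quantified intermediate states in the $n+1$-ary formula $R\overleftarrow{x}st$ with those appearing in the $n$-ary formula $R\overleftarrow{z}ut$ plus one extra $R$-triple, and to keep track of whether the new path element occupies the first or the second argument of $R$. Once this alignment is spelled out, which is essentially the content of Lemma~\ref{lem: subpath}, the two directions of the equivalence become routine manipulations of nested existential quantifiers.
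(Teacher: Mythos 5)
Your proposal is correct and follows essentially the same route as the paper: induction on $n$, unfolding $A \LD^{\! n+1} B$ as $A \LD (A \LD^{\! n} B)$ via the satisfaction clause for $\LD$, and matching paths of length $n+1$ with a first element plus a length-$n$ tail using Lemma~\ref{lem: subpath}. The only cosmetic difference is that the paper argues the right-to-left direction by contraposition (extracting a counterexample path $\langle x, \bar z\rangle$ from a failure of $A \LD^{\! k+1} B$), whereas you establish the same equivalence by reassembling quantifier configurations directly; the content is identical.
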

\begin{proof}
Induction on $n$. Both base cases $n = 1$ are straightforward consequences of the satisfaction clauses for the division operators. Next, we show that if (a) holds for $k \in \{ 1, \ldots, n - 1 \}$, then it holds for $k+1$; a similar claim about (b) is established analogously. Firstly, assume that $s \vDash A \LD^{\! k+1} B$ and we have $\bar{x} = \langle x_1, \ldots, x_{k+1}\rangle = \langle x_1, \bar{z}\rangle$ such that $R \overleftarrow{x} s t$ and $\bar{x} \vDash A$. We have to show that $t \vDash B$. By the definition of $A \LD^{\! k+1} B$, we have $s \vDash A \LD (A \LD^{\! k} B)$ and by the definition of $R \overleftarrow{x} st$ we have $Rx_1 s y_1$ and $R \overleftarrow{z} y_1 t$ for some $y_1$. Hence, $y_1 \vDash A \LD^{\! k} B$ by the satisfaction clause for left division and so $t \vDash B$ by the induction hypothesis ($\bar{z}$ is obviously a path of length $k$). Secondly, we show that if $s \not\vDash A \LD^{\! k+1} B$, then there is a path $\bar{x}$ of length $k+1$ such that $R \overleftarrow{x} s t$, $\bar{x} \vDash A$ and $t \not\vDash B$. If $s \not\vDash A \LD^{\! k+1} B$, then $s \not\vDash A \LD (A \LD^{\! n} B)$ by the definition of $A \LD^{\! k+1} B$. Hence, there are $x,y$ such that $Rxsy$, $x \vDash A$ and $y \not\vDash A \LD^{\! k} B$. The latter means, by the induction hypothesis, that there is a sequence $\bar{z}$ of length $k$ and a state $t$ such that $R\overleftarrow{z}yt$ and $\bar{z} \vDash A$ while $t \not\vDash B$. But $\bar{x} = \langle x, \bar{z}\rangle$ is a sequence of length $k+1$ satisfying $A$ such that $R\overleftarrow{z} s t$.
\end{proof}

\noindent The proposition provides a lead as to the appropriate satisfaction conditions for iterated versions of the division operators to which we now turn.  

The Lambek language with distributive bounded-lattice connectives and iterative division operators $\mathcal{L}_{[\LD, \PR, \RD, \LDI, \RDI, \land, \lor, \top, \bot]}$, denoted also as $\mathcal{L}_{1}$, adds to $\mathcal{L}_0$ two binary operators $\LDI$ (``iterative left division'') and $\RDI$ (``iterative right division''). A \emph{finite path} of elements of a model is a path $\bar{x}$ of length $n$ for some natural number $n \geq 1$. The satisfaction relation, when extended to $\mathcal{L}_1$, is assumed to satisfy the following new clauses:
\begin{itemize}
 \item[] $s \vDash_M A \LDI B$ iff, for all $t$ and finite paths $\bar{x}$, if $R \overleftarrow{x} st$ and $\bar{x} \vDash_M A$, then $t \vDash_M B$; 
  \item[] $s \vDash_M B \RDI A$ iff, for all $t$ and finite paths $\bar{x}$, if $R s \overrightarrow{x} t$ and $\bar{x} \vDash_M A$, then $t \vDash_M B$.
 \end{itemize} 
Validity is defined as before.

\begin{proposition}
The following sequents are valid in all Do\v{s}en frames:
\begin{itemize}
	\item[(a)] $A \LDI B \land A \LDI C \vdash A \LDI (B \land C)$ and $B \RDI A \land C \RDI A \vdash (B \land C) \RDI A$;
	\item[(b)] $A \LDI B \vdash A \LD B \land A \LD (A \LDI B)$ and $B \RDI A \vdash B \RD A \land (B \RDI A) \RD A$;
	\item[(c)] $A \LD B \land A \LD (A \LDI B) \vdash A \LDI B$ and $(B \RDI A) \RD A \land B \RD A \vdash B \RDI A$.
\end{itemize}
The following rules preserve validity in Do\v{s}en models:
\begin{itemize}
	\item[(d)] $\dfrac{A \vdash B \quad C \vdash D}{B \LDI C \vdash A \LDI D}$ and
	$\dfrac{A \vdash B \quad C \vdash D}{C \RDI B \vdash D \RDI A}$;
	\item[(e)] $\dfrac{A \vdash B \LD A}{A \vdash B \LDI A}$ and
	$\dfrac{A \vdash A \RD B}{A \vdash A \RDI B}$.
\end{itemize}
\end{proposition}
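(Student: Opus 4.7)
My plan is to derive each item directly from the semantic clauses for $\LDI$ and $\RDI$, invoking Lemma \ref{lem: subpath} whenever a path has to be split or extended. I will only treat the $\LDI$ half of each item; the $\RDI$ half is entirely dual, so the arguments transfer by reversing the direction of the path predicate.

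For (a), the idea is simply that if $s$ satisfies both $A \LDI B$ and $A \LDI C$, then for any $A$-path $\bar{x}$ and endpoint $t$ with $R \overleftarrow{x} s t$ both $t \vDash_M B$ and $t \vDash_M C$ hold, giving $t \vDash_M B \land C$. Part (d) is similar by composition: an $A$-path is automatically a $B$-path under $A \vdash B$, and the endpoint lies in $D$ once it lies in $C$.

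For (b), the first conjunct drops out by restricting the universal quantifier in the $\LDI$ clause to paths of length one. For the second conjunct I would suppose $R x_1 s y_1$ with $x_1 \vDash_M A$ and, given any $A$-path $\bar{z}$ with $R \overleftarrow{z} y_1 t$, glue on $x_1$ to form $\bar{x} := \langle x_1, \bar{z}\rangle$; unwinding the definition of $R \overleftarrow{\cdot}\cdot\cdot$ shows that $R \overleftarrow{x} s t$ holds, so $s \vDash_M A \LDI B$ forces $t \vDash_M B$ and hence $y_1 \vDash_M A \LDI B$. For (c), assuming $s \vDash_M A \LD B$ and $s \vDash_M A \LD (A \LDI B)$, I would fix an $A$-path $\bar{x}$ of length $n$ with $R \overleftarrow{x} s t$ and split on $n$: if $n = 1$, the first conjunct gives $t \vDash_M B$; if $n \geq 2$, I write $\bar{x} = \langle x_1, \bar{z}\rangle$, use Lemma \ref{lem: subpath} to produce $y_1$ with $R x_1 s y_1$ and $R \overleftarrow{z} y_1 t$ (with $\bar{z}$ still an $A$-path), then the second conjunct yields $y_1 \vDash_M A \LDI B$ and so $t \vDash_M B$.

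The only item demanding genuine work is the induction-style rule (e), which I would handle by induction on path length. Assuming $A \vdash B \LD A$ is valid in $M$, every $A$-state is a $(B \LD A)$-state. To prove $A \vdash B \LDI A$ in $M$, I fix $s \vDash_M A$ and show by induction on $n$ that whenever $\bar{x}$ is a $B$-path of length $n$ with $R \overleftarrow{x} s t$, then $t \vDash_M A$. The base case $n = 1$ is immediate from $s \vDash_M B \LD A$. For the inductive step I split $\bar{x}$ as $\langle x_1, \bar{z}\rangle$ via Lemma \ref{lem: subpath} to obtain a witness $y_1$ with $R x_1 s y_1$ and $R \overleftarrow{z} y_1 t$; the hypothesis gives $y_1 \vDash_M A$, and the induction hypothesis applied at $y_1$ with the shorter $B$-path $\bar{z}$ gives $t \vDash_M A$. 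The main obstacle throughout is purely bookkeeping with the path predicates $R \overleftarrow{x} \cdot\cdot$ and $R \cdot \overrightarrow{x} \cdot$, which is precisely what Lemma \ref{lem: subpath} is designed to manage.
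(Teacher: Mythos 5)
Your proposal is correct and follows essentially the same route as the paper: items (a) and (d) read off directly from the satisfaction clauses, (b) and (c) by gluing/splitting paths as $\langle x_1, \bar{z}\rangle$ via Lemma \ref{lem: subpath}, and (e) by propagating $A$ along the intermediate states of the path (the paper phrases this as showing $y_i \vDash_M A$ for all $i$ rather than as an explicit induction on path length, but it is the same argument). The only nitpick is in (e): since you apply the induction hypothesis at $y_1$ rather than at the originally fixed $s$, the statement you induct on should quantify over all $A$-satisfying start states, not a fixed one.
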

\begin{proof}
We show just the $\LDI$-parts of (b), (c) and (e). (b) Firstly, it is clear that $A \LDI B \vdash A \LD B$ is valid (consider paths of length $1$). Secondly, fix $s$, take some $ t, u$ and assume that there are $x$ and $\bar{z}$ satisfying $A$ such that $Rxst$ and $R \overleftarrow{z} t u$. We have to prove that $u \vDash B$. It is clear that $R \overleftarrow{\langle x, \bar{z}\rangle} s u$ and $\langle x, \bar{z}\rangle \vDash A$; hence if $s \vDash A \LDI B$, then $u \vDash B$ and so, in general, $s \vDash A \LD (A \LDI B)$. 

(c)  Assume that $s \vDash A \LD (A \LDI B) \land A \LD B$ and take some finite path $\bar{x}$ satisfying $A$ such that $R \overleftarrow{x} s t$. We have to show that $t \vDash B$. If $\bar{x} = \langle x\rangle$, then this follows from the assumption that $s \vDash A \LD B$. If $\bar{x} = \langle x, \bar{z}\rangle$, then $R \overleftarrow{z} y t$ and $y \vDash A \LDI B$ by Lemma \ref{lem: subpath} and the assumption $s \vDash A \LD (A \LDI B)$; it then follows readily that $t \vDash B$.

To prove (e), assume that $s \vDash_M A$ and $R \overleftarrow{x} s t$ for some $\bar{x} \vDash_M B$ and $t$. We have to show that $t \vDash_M A$. By the definition of $R\overleftarrow{x} s t$, there are $y_1, \ldots, y_{n-1}$ such that 
\[
Rx_1sy_1 \land \bigwedge_{1 \leq i \leq n-2} Rx_{i+1} y_i y_{i+1} \land Rx_{n}y_{n-1}t
\]
If $A$ entails $B \LD A$ in $M$, then $y_i \vDash_M A$ for all $i \in \{ 1, \ldots, n-1 \}$. Hence, $y_{n-1} \vDash_M B \LD A$ and so $t \vDash_M A$. 
 \end{proof}
 
 \begin{remark}\label{rem: transitive closure}
 We may define the \emph{left transitive closure} $R^{+l}$ of $R$ stepwise as follows:
 \begin{align*}
 R^{1} & := R \\
 R^{n+1} & := \{ \langle s, t, u \rangle \mid \exists x y ( R s y u \land R^{n} x t y ) \} \\
 R^{+l} & := \bigcup_{1 \leq n} R^{n}
 \end{align*}
 The right transitive closure $R^{+r}$ of $R$ may be defined similarly. It is interesting to note that the implication 
 \begin{equation}\label{eq: transitive closure}
 s \vDash_M A \LDI B \implies  \Big ( \forall t, u (R^{+l} tsu \land t \vDash_M A \Rightarrow u \vDash_M B \Big)
 \end{equation}
 is not valid in general, although the converse implication is valid. (A similar claim holds for $R^{+r}$ and $B \RDI A$.) To see the former, take a model where $R = \{ \langle x, s, y\rangle, \langle t, y, u\rangle\}$ and $V(p) = \{ t \}, V(q) = \emptyset$. Then $R^{+l} = \{ \langle t, s, u \rangle \}$, so the consequent of \eqref{eq: transitive closure} fails for $A = p$ and $B = q$. However, there are no finite paths $\bar{x}$ such that $R \overleftarrow{x} s u$ and $\bar{x} \vDash A$ (only $\langle x, t\rangle$ satisfies the former condition, but $p$ fails in $x$), so vacuously $s \vDash p \LDI q  $. To see the latter, observe that if $R\overleftarrow{x}su$ and $x$ is the last element of $\bar{x}$, then $R^{+l}xsu$. 
 \end{remark}
 
 We note that \[ A \LDI C \land B \LDI C \vdash (A \lor B) \LDI C 
 \quad\text{ and }\quad
C \RDI A \land C \RDI B \vdash C \RDI (A \lor B) 
 \] are not valid; the reader is invited to find a counterexample as an exercise.

\section{Completeness and decidability}\label{sec: comp}
In this section we provide a sound and (weakly) complete axiomatization of the $\mathcal{L}_1$-theory of all Do\v{s}en frames and we show the theory to be decidable via a finite canonical model construction. Our technique derives from \cite{Nishimura1982}; it is a variant of the finite canonical model construction often used in completeness proofs for logics with fixpoint operators such as epistemic logics with common knowledge or Propositional Dynamic Logic. 

Let us consider the following axiom system, denoted as $\mathit{IDFNL}$:
\begin{description}
\item[Distributive lattice axioms] \mbox{} \medskip
	\begin{itemize}
	\item[] $A \vdash A$ \quad 
				$A \vdash \top$ \quad $\bot \vdash A$
	\item[] $A \land B \vdash A$ \quad $A \land B \vdash B$ \quad
				 $A \vdash A \lor B$ \quad $B \vdash A \lor B$
	\item[]	 $A \land (B \lor C) \vdash (A \land B) \lor (A \land C)$ 
	\end{itemize}
	\medskip

\item[Residuation rules] \mbox{} \medskip
	\begin{itemize}
	\item[] $\dfrac{A \PR B \vdash C}{B \vdash A \LD C}$ \qquad $\dfrac{B \vdash A \LD C}{A \PR B \vdash C}$ \qquad $\dfrac{A \PR B \vdash C}{A \vdash C \RD B}$ \qquad $\dfrac{A \vdash C \RD B}{A \PR B \vdash C}$
	
	\end{itemize}
	\medskip

\item[Iteration axioms] \mbox{} \medskip

	\begin{minipage}{0.5\linewidth}
	\begin{itemize}
	\item[] $A \LDI B \land A \LDI C \vdash A \LDI (B \land C)$
	\item[] $A \LDI B \vdash A \LD B \land A \LD (A \LDI B)$
	\item[] $A \LD B \land A \LD (A \LDI B) \vdash A \LDI B$
	\end{itemize}
	\end{minipage}
	\begin{minipage}{0.5\linewidth}
	\begin{itemize}
	\item[] $B \RDI A \land C \RDI A \vdash (B \land C) \RDI A$
	\item[] $B \RDI A \vdash B \RD A \land (B \RDI A) \RD A$
	\item[] $(B \RDI A) \RD A \land B \RD A \vdash B \RDI A$
	\end{itemize}
	\end{minipage}
	\medskip
	
\item[Distributive lattice rules] \mbox{} \medskip
	\begin{itemize}
	\item[] $\dfrac{A \vdash B \quad A \vdash C}{A \vdash B \land C}$ \qquad
				$\dfrac{A \vdash C \quad B \vdash C}{A \lor B \vdash C}$\qquad
				 $\dfrac{A \vdash B \quad B \vdash C}{A \vdash C}$
	\end{itemize}
	\medskip
 	
 \item[Iteration rules] \mbox{} \medskip
  	\begin{itemize}
	\item[] $\dfrac{A \vdash B \quad C \vdash D}{B \LDI C \vdash A \LDI D}$ \qquad
	$\dfrac{A \vdash B \quad C \vdash D}{C \RDI B \vdash D \RDI A}$ \qquad
	 $\dfrac{A \vdash B \LD A}{A \vdash B \LDI A}$\qquad
	$\dfrac{A \vdash A \RD B}{A \vdash A \RDI B}$
	\end{itemize}
\end{description}
Theorems are defined as usual. We write $A \longrightarrow B$ instead of ``$A \vdash B$ is provable in $\mathit{IDFNL}$'' for the rest of the paper.

\begin{remark}\label{rem: divisions and modalities}
If the sequence of symbols ``$A \LD$'' is seen as a formula-indexed unary modality $[A]$, the special case ``$(A \lor B) \LD$'' is seen as a choice modality $[A \cup B]$ and the special case ``$A \LDI$'' is seen as a transitive-closure modality $[A^{+}]$, then some of the iteration axioms and rules correspond to variants of Segerberg's axioms for Propositional Dynamic Logic; see \cite[ch.\ 7]{Harel2000}, for example. (A similar claim applies to ``$\RD A$'', which we leave out of the present discussion.) It follows from the residuation rules that i) $[A]$ is a regular modality in the sense of \cite{Segerberg1971,Chellas1980}---$[A]B \land [A]C$ entails $[A](B \land C)$; ii) $[A]$ is a monotonic modality---if $B$ entails $C$, then $[A]B$ entails $[A]C$; iii) $[A]$ is a normal modality---$[A]\top$ is entailed by $\top$. For similar reasons, $[A^{+}]$ can also be seen as a normal, regular and monotonic modality.

The residuation rules also imply that ``choice'' $[A \cup B]$ satisfies the usual PDL reduction axiom, according to which $[A \cup B]C$ is equivalent to $[A]C \land [B]C$. The second iteration rule corresponds to the \emph{Loop Invariance Rule} of PDL according to which $A$ entails $[B^{+}]A$ in case $A$ entails $[B]A$. The second and third iteration axiom together say that $[A^{+}]B$ is a fixed point of the function $f: C \mapsto [A]B \land [A]C$.\footnote{Strictly speaking, this function should be defined on the equivalence classes of formulas.} Note that $[A^{+}]B$ is not a fixed point of $f' : C \mapsto [A]C$ (we are dealing with a version of transitive closure---but recall Remark \ref{rem: transitive closure}---not reflexive transitive closure).

As noted earlier, the dual of the first iteration axiom does not hold; this is reminiscent of the situation in PDL where, transposed to our setting, $[A^{+} \cup B^{+}]C$ is not equivalent to $[(A \cup B)^{+}]C$.
\end{remark}

Given Remark \ref{rem: divisions and modalities}, it is not surprising that completeness and decidability results concerning our logic with iterative division operators can be established using arguments similar to the proofs for Propositional Dynamic Logic. Given that we are working in a setting without Boolean negation, we find the approach of Nishimura \cite{Nishimura1982} particularly suitable. We augment our variation of Nishimura's argument with usual techniques pertaining to relational semantics of substructural logics \cite{Restall2000}.

Let $L$ be any extension of Distributive Lattice Logic. The notion of a prime $L$-theory is defined as expected (a set of formulas closed under conjunction and provable sequents that has the disjunction property). An independent $L$-pair is an ordered pair of sets of formulas $\Gamma, \Delta$ such that there are no finite $\Gamma' \subseteq \Gamma$, $\Delta' \subseteq \Delta$ such that $\bigwedge \Gamma' \longrightarrow \bigvee \Delta'$ (in $L$). Recall that $\bigwedge \emptyset := \top$ and $\bigvee \emptyset := \bot$. We will rely on the following well-known result:

\begin{lemma}[Pair Extension]\label{lem: pair extension}
Let $L$ be an extension of the Distributive Lattice Logic and let $\langle \Gamma, \Delta \rangle$ be an independent $L$-pair. Assume that $\Phi$ is a set of formulas such that $\Gamma \cup \Delta \subseteq \Phi$. There is an independent $L$-pair $\langle \Gamma', \Delta'\rangle$ such that $\Gamma \subseteq \Gamma'$, $\Delta \subseteq \Delta'$ and $\Gamma' \cup \Delta' = \Phi$.
\end{lemma}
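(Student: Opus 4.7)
The plan is a standard Lindenbaum-style construction. I fix an enumeration $\phi_0, \phi_1, \ldots$ of $\Phi \setminus (\Gamma \cup \Delta)$ (which may be finite or countably infinite), set $\Gamma_0 = \Gamma$ and $\Delta_0 = \Delta$, and build the extension stepwise: at stage $n+1$ I add $\phi_n$ to whichever of the two sides preserves independence. Defining $\Gamma' = \bigcup_n \Gamma_n$ and $\Delta' = \bigcup_n \Delta_n$, by construction $\Gamma' \cup \Delta' = \Phi$, $\Gamma \subseteq \Gamma'$, and $\Delta \subseteq \Delta'$; independence transfers from the approximants to the limit because any witness to dependence in $\langle \Gamma', \Delta'\rangle$ uses only finitely many formulas on each side and would therefore already appear at some stage $n$.

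The crux is the one-step extension claim: whenever $\langle X, Y\rangle$ is independent and $\phi$ is any formula, at least one of $\langle X \cup \{\phi\}, Y\rangle$ and $\langle X, Y \cup \{\phi\}\rangle$ remains independent. I would establish this by contradiction. Simultaneous failure yields finite $X_1 \subseteq X$, $Y_1 \subseteq Y$ with $\bigwedge X_1 \land \phi \longrightarrow \bigvee Y_1$ together with finite $X_2 \subseteq X$, $Y_2 \subseteq Y$ with $\bigwedge X_2 \longrightarrow \bigvee Y_2 \lor \phi$; the pair $\langle X, Y\rangle$ being independent forces $\phi$ actually to occur on the relevant side in each of these witnesses. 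Now meet the second sequent with $\bigwedge X_1$ and apply the distributivity axiom to rewrite its right-hand side as $(\bigwedge X_1 \land \bigvee Y_2) \lor (\bigwedge X_1 \land \phi)$; the first disjunct is bounded above by $\bigvee Y_2$, and the first assumed sequent bounds the second disjunct above by $\bigvee Y_1$. Transitivity together with the join rule then yields $\bigwedge(X_1 \cup X_2) \longrightarrow \bigvee(Y_1 \cup Y_2)$, contradicting independence of $\langle X, Y\rangle$.

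Nothing in this plan poses a real obstacle; the only delicate point is to keep the finite-meet and finite-join bookkeeping straight, in particular the conventions $\bigwedge \emptyset := \top$ and $\bigvee \emptyset := \bot$, so that the distributive-lattice derivation really goes through for arbitrary finite subsets (including empty ones). Once the one-step lemma is in hand, the limit argument is entirely routine, and since the proof never invokes any structural property beyond the distributive lattice axioms and rules, the lemma holds uniformly for any extension $L$ of Distributive Lattice Logic, exactly as stated.
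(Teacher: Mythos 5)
Your proposal is correct and is essentially the argument the paper defers to: the paper's ``proof'' is just a citation to Restall's \emph{An Introduction to Substructural Logics}, where the Pair Extension Lemma is established by exactly this Lindenbaum-style construction, with the one-step extension claim proved via distributivity just as you do. The only point worth noting is that the enumeration of $\Phi \setminus (\Gamma \cup \Delta)$ exists because the language is built over a denumerable set of propositional variables, so $\Phi$ is countable; with that observed, your argument is complete.
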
 
\begin{proof}
Essentially \cite[pp.\ 92--95]{Restall2000}.
\end{proof}

\begin{corollary}
If $L$ is an extension of Distributive Lattice Logic and $\langle \Gamma, \Delta\rangle$ is an independent $L$-pair, then there is a prime $L$-theory $\Sigma$ such that $\Gamma \subseteq \Sigma$ and $\Delta \cap \Sigma = \emptyset$.
\end{corollary}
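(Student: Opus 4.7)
The plan is to apply the Pair Extension Lemma with $\Phi$ taken to be the set of \emph{all} formulas of the language, and then to read off the desired prime $L$-theory from the first coordinate of the resulting independent pair. Given $\langle \Gamma, \Delta\rangle$ independent, we certainly have $\Gamma \cup \Delta \subseteq \Phi$, so Lemma \ref{lem: pair extension} yields an independent pair $\langle \Gamma', \Delta'\rangle$ with $\Gamma \subseteq \Gamma'$, $\Delta \subseteq \Delta'$, and $\Gamma' \cup \Delta' = \Phi$. I will set $\Sigma := \Gamma'$ and verify the four required properties.

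First I would observe that $\Gamma' \cap \Delta' = \emptyset$: if some $A$ belonged to both, then $\{A\} \subseteq \Gamma'$ and $\{A\} \subseteq \Delta'$ together with the axiom $A \longrightarrow A$ would violate independence. This immediately gives $\Sigma \cap \Delta = \emptyset$ since $\Delta \subseteq \Delta'$. Next, to see that $\Sigma$ is closed under provable sequents, suppose $A \in \Sigma$ and $A \longrightarrow B$; since $\Gamma' \cup \Delta' = \Phi$ covers all formulas, either $B \in \Sigma$ (as desired) or $B \in \Delta'$, but the latter contradicts independence via the finite subsets $\{A\}$ and $\{B\}$. Closure under conjunction is analogous: if $A, B \in \Sigma$ but $A \land B \in \Delta'$, then $\{A, B\} \subseteq \Gamma'$ and $\{A \land B\} \subseteq \Delta'$ together with the provable sequent $A \land B \longrightarrow A \land B$ (obtained from distributive-lattice rules applied to $A \land B \longrightarrow A$ and $A \land B \longrightarrow B$) again break independence.

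Finally, for the disjunction property, suppose $A \lor B \in \Sigma$ while neither $A$ nor $B$ is in $\Sigma$; then $A, B \in \Delta'$ by the totality condition $\Gamma' \cup \Delta' = \Phi$, and $\{A \lor B\} \subseteq \Gamma'$ together with $\{A, B\} \subseteq \Delta'$ yields $\bigwedge\{A \lor B\} = A \lor B \longrightarrow A \lor B = \bigvee\{A, B\}$, contradicting independence.

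There is no real obstacle here; the argument is entirely routine once the Pair Extension Lemma is in hand, and the only subtlety is remembering to take $\Phi$ large enough (the whole formula set) so that each membership question can be resolved by independence rather than by an arbitrary choice.
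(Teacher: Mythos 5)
Your proof is correct and follows essentially the same route as the paper: apply the Pair Extension Lemma with $\Phi$ the whole language and take $\Sigma := \Gamma'$. The only difference is that the paper outsources the verification that $\Gamma'$ is a prime theory to a citation (Restall, Lemma 5.16), whereas you spell out those routine checks explicitly.
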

\begin{proof}
Lemma \ref{lem: pair extension} in the case $\Phi$ is the whole language; in that case $\Gamma'$ is a prime theory (see \cite[Lemma 5.16]{Restall2000}).
\end{proof}

The \emph{restriction} of $\langle \Gamma, \Delta\rangle$ to $\Sigma$ is $\langle \Gamma \cap \Sigma, \Delta \cap \Sigma\rangle$. We say that $\langle \Gamma, \Delta\rangle$ is a \emph{full} pair iff $\Gamma \cup \Delta$ is the set of all formulas.

\begin{definition}[Closure]
A set of formulas $\Phi$ is the closure of a set of formulas $\Phi'$ iff it is the smallest set of formulas such that
\begin{enumerate}
\item $\Phi' \subseteq \Phi$;
\item $\top, \bot \in \Phi$;
\item $\Phi$ is closed under subformulas;
\item If $A \LDI B \in \Phi$, then $A \LD B \in \Phi$;
\item If $B \RDI A \in \Phi$, then $B \RD A \in \Phi$. 
\end{enumerate}
If $\Phi$ is the closure of $\Phi$, then we say that $\Phi$ is closed.
\end{definition}

\begin{lemma}
The closure of any finite set is finite.
\end{lemma}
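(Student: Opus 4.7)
The plan is to exhibit an explicit finite superset of $\Phi'$ that satisfies all five closure conditions and is plainly minimal among such sets. Let $\Phi'$ be a finite set of formulas, and let $\Phi_0 = \mathrm{Sub}(\Phi') \cup \{\top, \bot\}$, where $\mathrm{Sub}(\Phi')$ is the set of subformulas of formulas in $\Phi'$. Since $\Phi'$ is finite and each formula has only finitely many subformulas, $\Phi_0$ is finite.

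Next I would define
\[
\Phi_1 = \Phi_0 \cup \{A \LD B \mid A \LDI B \in \Phi_0\} \cup \{B \RD A \mid B \RDI A \in \Phi_0\}.
\]
Clearly $|\Phi_1| \leq 3|\Phi_0|$, so $\Phi_1$ is finite. I would then verify the five closure conditions: (1) and (2) are immediate from $\Phi' \subseteq \Phi_0 \subseteq \Phi_1$ and $\top, \bot \in \Phi_0$; for (3), the key observation is that the subformulas of a newly added $A \LD B$ (resp.\ $B \RD A$) are $A$ and $B$, which are already subformulas of the corresponding $A \LDI B \in \Phi_0$ (resp.\ $B \RDI A \in \Phi_0$), hence already in $\Phi_0 \subseteq \Phi_1$; for (4) and (5), note that we added no formulas with principal connective $\LDI$ or $\RDI$, so the $\LDI$-formulas and $\RDI$-formulas of $\Phi_1$ are exactly those of $\Phi_0$, and for each such formula the required counterpart was put into $\Phi_1$ by construction.

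Finally, I would verify minimality: any set $\Phi$ satisfying conditions (1)--(5) and containing $\Phi'$ must contain all subformulas of $\Phi'$ together with $\top, \bot$, hence $\Phi_0 \subseteq \Phi$; then by conditions (4) and (5) applied to the $\LDI$- and $\RDI$-formulas in $\Phi_0$, all of the additional formulas used to form $\Phi_1$ must lie in $\Phi$ as well. Therefore $\Phi_1$ is the closure of $\Phi'$, and it is finite.

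There is no real obstacle here: the whole argument rests on the single observation that conditions (4) and (5) add formulas ($A \LD B$ and $B \RD A$) whose own subformulas are already subformulas of formulas already present, so the process terminates after one round rather than cascading. The only mild care is to check that no further application of the closure conditions adds anything new, which is exactly what the final verification confirms.
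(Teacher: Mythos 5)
Your proof is correct. The paper states this lemma without giving any proof, and your argument is precisely the intended one: the key observation that the formulas added by closure conditions (4) and (5) have principal connective $\LD$ or $\RD$ and contribute no subformulas beyond those of the $\LDI$- and $\RDI$-formulas already present, so the process stabilizes after a single round, is exactly what makes the lemma immediate.
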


\begin{definition}[Finite canonical model]
Let $\Phi$ be a finite closed set. The model $M_{\Phi}$ is defined as follows.
\begin{itemize}
\item $S_{\Phi}$ is the set of independent $\mathit{IDFNL}$-pairs $a = \langle a_{in}, a_{out} \rangle$ such that $a_{in} \cup a_{out} = \Phi$ (note that $a_{in} \cap a_{out} = \emptyset$ by definition of independent pair); we often write ``$A \in a$'' instead of ``$A \in a_{in}$'' and ``$a \vdash A$'' instead of ``$\bigwedge a_{in} \longrightarrow A$'';
\item $R_{\Phi}abc$ iff, for all $A,B \in \Phi$, if $a \vdash A$ and $b \vdash A \LD B$, then $c \vdash B$ (note that, if $C \in \Phi$, then $a \vdash C$ iff $C \in a$);
\item if $p \in \Phi$, then $a \in V_{\Phi}(p)$ iff $p \in a$; otherwise $V_{\Phi}(p) = \emptyset$. 
 \end{itemize}
The satisfaction relation $\vDash_\Phi$ is defined in the usual manner.
\end{definition}

We note that the restriction to \emph{finite} closed sets is required by our strategy in proving the next lemma for $\LDI$ and $\RDI$.  

\begin{lemma}[Truth Lemma]\label{lem: truth lemma}
For all $M_{\Phi}$, $E \in \Phi$ and $z \in M_{\Phi}$, $E \in z$ iff $z \vDash_{\Phi} E$.
\end{lemma}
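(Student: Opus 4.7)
I prove the statement by induction on the complexity of $E$, using an ordering that ranks $A \LDI B$ (resp.\ $B \RDI A$) strictly above $A \LD B$ (resp.\ $B \RD A$). The base cases and the cases for $\land$, $\lor$, $\PR$, $\LD$, and $\RD$ are standard canonical-model arguments for the Distributive Full Non-associative Lambek Calculus adapted to the finite-$\Phi$ setting: the definitions of $V_\Phi$ and $R_\Phi$, closure of $\Phi$ under subformulas, the disjunction property for full independent pairs over $\Phi$, and Lemma~\ref{lem: pair extension} (to manufacture witnesses in $S_\Phi$ for the existential directions) supply everything needed. I focus on the $\LDI$ case; the $\RDI$ case is symmetric.

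For $E = A \LDI B$, the direction ``$E \in z \Rightarrow z \vDash_\Phi E$'' is proved by a subsidiary induction on the length $n$ of a candidate path $\bar{x}$. The case $n = 1$ follows from the iteration axiom $A \LDI B \longrightarrow A \LD B$, the definition of $R_\Phi$, and the outer induction hypothesis for $A$ and $B$. The case $n + 1$ uses Lemma~\ref{lem: subpath} to split $\bar{x}$ into its first entry and a length-$n$ tail; the axiom $A \LDI B \longrightarrow A \LD (A \LDI B)$, applied via the $R_\Phi$ clause to $A$ and $A \LDI B$ (both in $\Phi$ by the closure conditions), places $A \LDI B$ at the intermediate state, and the subsidiary IH closes the case.

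The hard direction is ``$z \vDash_\Phi A \LDI B \Rightarrow A \LDI B \in z$'', argued by contraposition. Assume $A \LDI B \notin z$, let $V := \{ w \in S_\Phi : A \LDI B \in w \}$, and let $U$ be the set of $w \in S_\Phi$ with $A \LDI B \notin w$ such that no finite $A$-path in $M_\Phi$ starting at $w$ ends at a state outside $B$. If $z \notin U$, the required witness path for $z \not\vDash_\Phi A \LDI B$ exists by the very definition of $U$, so suppose $z \in U$ and aim for a contradiction. Put $\alpha := \bigvee_{w \in U \cup V} \bigwedge w_{in}$, which is a genuine formula because $S_\Phi$ is finite. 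Two claims do the work: (i) $\alpha \longrightarrow A \LD B$, and (ii) $\alpha \longrightarrow A \LD \alpha$. Claim (i) follows for $w \in U$ from the outer IH for $A \LD B \in \Phi$ together with the definition of $U$, and for $w \in V$ from the axiom $A \LDI B \longrightarrow A \LD B$. The underlying fact for (ii) is that $U \cup V$ is closed under $A$-transitions in $M_\Phi$: $V$ is closed via the axiom $A \LDI B \longrightarrow A \LD (A \LDI B)$ and the $R_\Phi$ clause, while an $A$-transition from a $U$-state landing outside $U \cup V$ would, by prepending, deliver a finite $A$-path from a $U$-state to a $B$-failing state, contradicting membership in $U$. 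Translating this semantic closure into the syntactic (ii) is a pair-extension argument: by residuation and distributivity of $\PR$ over $\lor$, (ii) reduces to showing $A \PR \bigwedge w_{in} \longrightarrow \alpha$ for every $w \in U \cup V$, and if this failed for some $w$, Lemma~\ref{lem: pair extension} combined with the usual extraction of $R_\Phi$-witnesses from prime theories containing a product would manufacture the forbidden escape from $U \cup V$.

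Given (i) and (ii), set $\beta := \alpha \land B$. The regular-modality laws for $A \LD \cdot$ (Remark~\ref{rem: divisions and modalities}) applied to (i) and (ii) yield $\beta \longrightarrow A \LD \beta$; loop invariance then gives $\beta \longrightarrow A \LDI \beta$; the second iteration rule applied to $\beta \longrightarrow B$ gives $A \LDI \beta \longrightarrow A \LDI B$, whence $\beta \longrightarrow A \LDI B$. Composing $\alpha \longrightarrow A \LD \beta$ (obtained from (i), (ii), and regular modality) with $\beta \longrightarrow A \LDI B$ gives $\alpha \longrightarrow A \LD (A \LDI B)$; the reverse iteration axiom with (i) then yields $\alpha \longrightarrow A \LDI B$. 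Since $z \in U$ implies $z \longrightarrow \alpha$, we conclude $A \LDI B \in z$, the desired contradiction. The main obstacle is claim (ii), where the finiteness of $\Phi$ is essential and the characteristic-formula-plus-pair-extension machinery must be used carefully.
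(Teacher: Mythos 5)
Your proof is correct and runs on the same machinery as the paper's: the finiteness of $S_\Phi$ to form a characteristic disjunction of the formulas $\bigwedge w_{in}$, a Pair Extension argument to show that this characteristic formula is preserved under $A \LD \cdot$, the loop invariance rule, and the fixed-point axioms for $\LDI$. The structural difference lies in the hard direction of the $\LDI$ case. The paper argues directly: given $a \vDash_\Phi A \LDI B$, it takes $Y$ to be the set of states reachable from $a$ by a finite $A$-path (all of which contain $B$ by the hypothesis and the induction hypothesis for $B$), proves $\phi_Y \longrightarrow A \LD \phi_Y$, hence $\phi_Y \longrightarrow A \LDI B$ by loop invariance and monotonicity, and concludes via $a \vdash A \LD \phi_Y$ and the third iteration axiom. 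You argue by contraposition with a globally defined set $U \cup V$ of ``safe'' states; this is sound, but it forces you to adjoin the set $V$ of states containing $A \LDI B$ (to make the set closed under $A$-transitions) and to invoke the already-established truth lemma for $A \LD B$ in claim (i), which is legitimate only under your modified complexity ordering (and is precisely why the closure conditions put $A \LD B$ into $\Phi$). The paper's local set $Y$ is closed under $A$-transitions by construction, so its version of your claim (ii) needs no case split and no appeal to the $A \LD B$ instance of the lemma. One detail you should spell out: your reduction of (ii) to $A \PR \bigwedge w_{in} \longrightarrow \alpha$ followed by the $\PR$-style witness extraction yields a triple $R_\Phi abc$ in which $b$ merely \emph{extends} $w$ (that is, $w_{in} \subseteq b_{in}$), not necessarily $b = w$; you need the easy additional observation that $R_\Phi abc$ together with $w_{in} \subseteq b_{in}$ entails $R_\Phi awc$ in order to obtain the forbidden escape from $w$ itself. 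Alternatively, follow the paper's $\LD$-style route and establish $w \vdash A \LD \alpha$ directly by extending the independent pair $\langle \{ C \mid w \vdash A \LD C \}, \{ \alpha \} \rangle$, which avoids the issue altogether.
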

\begin{proof}
Induction on the complexity of $E$. The base case holds by definition and the cases for constants and lattice connectives are easily established using the Distributive lattice axioms and rules. The cases for the Lambek connectives are established using standard arguments \cite{Restall2000}; we give the ones for $\LD$ and $\PR$ in full, just in case. We mostly omit the subscript $\Phi$ in the rest of the proof.

\underline{$E = A \LD B$.} $A \LD B \in b$ implies $b \vDash A \LD B$ by the definition of $R_{\Phi}$. Conversely, if $A \LD B \notin b$, then we reason as follows. Firstly, $\langle \{ C \mid b \vdash A \LD C \}, \{ B \} \rangle$ is an independent pair and so there is, by the Pair Extension Lemma \ref{lem: pair extension}, a full independent pair $\langle \Gamma, \Delta\rangle$, such that $\Gamma$ extends $\{ C \mid b \vdash A \LD C \}$ and $B \in \Delta$. The restriction $c = \langle c_{in}, c_{out}\rangle$ of $\langle\Gamma, \Delta\rangle$ to $\Phi$ is clearly in $S_{\Phi}$. Next, take the pair $\langle \{ A \}, \{ D \mid \exists C \notin \Gamma : b \vdash D \LD C \}\rangle$; this pair is independent, for otherwise $b \vdash A \LD \bigvee C_i$ for some disjunction of $C_i \notin \Gamma$ and thus some $C_i \in \Gamma$ by the construction of $\Gamma$, leading to a contradiction. Hence, there is a full independent pair $\langle \Sigma, \Theta\rangle$ such that $A \in \Sigma$ and $\Theta$ extends $\{ D \mid \exists C \notin \Gamma : b \vdash D \LD C \}$. The restriction $a = \langle a_{in}, a_{out}\rangle$ of $\langle \Sigma, \Theta\rangle$ to $\Phi$ is in $S_{\Phi}$. Moreover, $R_{\Phi}abc$; for take $D, C \in \Phi$ such that $b \vdash D \LD C$ and $c \not\vdash C$. The latter means that $C \notin \Gamma$, but then $a \not\vdash D$ by the construction of $a$. The case \underline{$E = B \RD A$} is established similarly.

\underline{$E = A \PR B$.} If $b \vdash B$, then $b \vdash A \LD (A \PR B)$ since $B \longrightarrow A \LD (A \PR B)$ by residuation. If also $a \vdash A$, then $Rabc$ implies $c \vdash A \PR B$ by the definition of $R_{\Phi}$. Hence, $c \vDash A \PR B$ implies $c \vdash A \PR B$. The converse implication is established as follows. Assume that $c \vdash A \PR B$. Then $c$ is extended by a full independent pair $\langle\Gamma, \Delta\rangle$. The pair $\langle \{ B \}, \{ C \mid \Gamma \not\vdash A  \PR C \}\rangle$ is easily shown to be independent; thus it is extended by some full independent pair $\langle \Sigma, \Theta\rangle$ with a restriction $b$ to $\Phi$. The pair $\langle \{A\}, \{ C \mid \exists D \in \Sigma : \Gamma \not\vdash C \PR D\}\rangle$ is also easily shown to be independent and thus extended by a full independent pair with a restriction $a$ to $\Phi$. It is clear that $a \vdash A$ and $b \vdash B$. It remains to show that $R_{\Phi}abc$. If $a \vdash C$ and $b \vdash C \LD D$, for some $C,D \in \Phi$, then $\Gamma \vdash C \PR (C \LD D)$ by the construction of $a$ and $b$. Hence, $\Gamma \vdash D$ and so $c \vdash D$. Consequently, $c \vdash A \PR B$ implies $c \vDash A \PR B$. 

\underline{$E = A \LDI B$.} First we prove that if $A \LDI B \in a$, $R \overleftarrow{x} a b$ for some $\bar{x}$ of length $n$ and $A \in x_i$ for all $i \in \{ 1, \ldots, n \}$, then $B \in b$. If $\bar{x} = \langle x\rangle$, then we have $Rxab$ and, by the second $\LDI$-axiom and the definition of closure, $A \LD B \in a$. Hence, $B \in b$ by the definition of the canonical $R$. Now assume that $\bar{x}$ of length $n \geq 2$. The assumption $R \overleftarrow{x} ab$ means that there are $y_1, \ldots, y_{n-1}$ such that 
\[
Rx_1ay_1 \land \bigwedge_{1 \leq i \leq n-2} Rx_{i+1} y_i y_{i+1} \land Rx_{n}y_{n-1}b
\]
Since $A \LDI B \longrightarrow A \LD (A \LDI B)$ by the second $\LDI$-axiom and both $A$ and $A \LDI B$ are in $\Phi$, we have $A \LDI B \in y_i$ for all $i \in \{ 1, \ldots, n-1 \}$. Since $A \LDI B \longrightarrow A \LD B$ by the second $\LDI$-axiom, $A \LD B \in y_{n-1}$ and so $B \in b$ by the definition of $R$. 

The converse claim is established as follows. Assume that $a \vDash_{\Phi} A \LDI B$. We define ($\bar{x}_n$ means that path $\bar{x}$ is of length $n$)
\begin{align*}
Y & := \Big\{ y \mid (\exists n \exists \bar{x}_n) \big( R \overleftarrow{x} ay \land (\forall i \in \{ 1, \ldots, n \})(A \in x_i) \big) \Big\} \\
\phi_Y & := \bigvee_{b \in Y} \big( \bigwedge b_{in} \big)
\end{align*}
(Note that if $Y = \emptyset$, then $\phi_Y$ is $\bot$.) We sometimes write ``$Y$'' instead of ``$\phi_Y$''and let the context disambiguate.

We first show that $\phi_Y \longrightarrow A \LD \phi_Y$ (here we need the assumption that $Y$ is finite). If not, then $\exists b \in Y$ such that $b \not\vdash A \LD Y$. It follows that $\bigwedge \{ C \mid b \vdash A \LD C \} \not\longrightarrow Y$. (Since $b \vdash A \LD \bigwedge \{ C \mid b \vdash A \LD C \}$.) So, by the Pair Extension Lemma, there is a prime theory $\Gamma$ containing all $C$ such that $b \vdash A \LD C$ and disjoint from the set of disjuncts of $\phi_Y$. Now take the independent pair $ \langle \{ A \}, \{ D \mid \exists C \notin \Gamma : b \vdash D \LD C \}\rangle$ (for the proof of independence, see the argument in case $E = A \LD B$); there is $d \in S$ such that $A \in d_{in}$ and $\{ D \mid \exists C \notin \Gamma : b \vdash D \LD C \} \cap \Phi \subseteq d_{out}$ by the Pair Extension Lemma. Define $c_{in} := \Gamma \cap \Phi$ and $c_{out}$ as the complement of $c_{in}$ relative to $\Phi$. It is easily seen that $R dbc$, for if $C \in \Phi$ and $C \notin c$, then $C \notin \Gamma$ and so, if $d \vdash D$ for some $D \in \Phi$, then $b \not\vdash D \LD C$ by the construction of $d$. Since $A \in d$ and $b \in Y$, we have $c \in Y$ and so $c \vdash Y$. But then $Y \in \Gamma$, which is impossible by the construction of $\Gamma$.

Hence, $Y \longrightarrow A \LD Y$ and so, by the second $\LDI$-rule, $Y \longrightarrow A \LDI Y$. Since all elements of $Y$ contain $B$, we have $Y \longrightarrow B$ and so $Y \longrightarrow A \LDI B$. It can be shown similarly as above that $a \vdash A \LD Y$; but since $A \LD Y \longrightarrow A \LD B \land A \LD (A \LDI B) \longrightarrow A \LDI B$, we obtain $a \vdash A \LDI B$. This means that $A \LDI B \in a$. 

The claim for \underline{$E = B \RDI A$} is established similarly.
\end{proof}

\begin{theorem}\label{thm: completeness}
$\mathit{IDFNL}$ is a sound and weakly complete axiomatization of the $\mathcal{L}_1$-theory of all Do\v{s}en frames.
\end{theorem}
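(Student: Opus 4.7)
The plan is to assemble the theorem from the machinery already in place: soundness is a catalogue check against Proposition 2 and standard arguments, and completeness is an application of the finite canonical model $M_{\Phi}$ and the Truth Lemma.

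For soundness, I would verify each axiom and rule in turn. The distributive lattice axioms and rules drop out immediately from the satisfaction clauses for $\land, \lor, \top, \bot$. The four residuation rules are the standard Lambek argument: unfolding the clauses for $\PR, \LD, \RD$ shows that $A \PR B \vdash C$ is valid in a model iff $B \vdash A \LD C$ is, and similarly for $\RD$. The six iteration axioms are precisely items (a)--(c) of Proposition 2, and the four iteration rules are items (d) and (e). Nothing new is required here.

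For completeness, suppose $A \vdash B$ is not a theorem of $\mathit{IDFNL}$. I would let $\Phi$ be the closure of $\{A,B\}$, which is finite by the preceding lemma, and I would work inside the finite canonical model $M_{\Phi}$. The pair $\langle \{A\}, \{B\}\rangle$ is independent, since $A \longrightarrow B$ would contradict the assumption. The Pair Extension Lemma with $\Phi$ as the universe produces a full independent pair $\langle \Gamma, \Delta\rangle$ with $\Gamma \cup \Delta = \Phi$, $A \in \Gamma$ and $B \in \Delta$. This pair is an element $z \in S_{\Phi}$ with $A \in z_{in}$ and $B \in z_{out}$. By the Truth Lemma, $z \vDash_{\Phi} A$ and $z \not\vDash_{\Phi} B$, so $A \vdash B$ fails in $M_{\Phi}$, which is a model over the Do\v{s}en frame $\langle S_{\Phi}, R_{\Phi}\rangle$. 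Hence $A \vdash B$ is not valid in all Do\v{s}en frames.

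The only genuinely delicate step has already been discharged, namely the $\LDI$ and $\RDI$ cases of the Truth Lemma, where finiteness of $\Phi$ lets one form the finite disjunction $\phi_Y$ and apply the loop-invariance rule. Once Proposition 2 and Lemma \ref{lem: truth lemma} are in hand, no further obstacle appears; the theorem is just a matter of putting the pieces together.
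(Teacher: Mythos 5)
Your proposal is correct and follows essentially the same route as the paper: soundness by checking the axioms and rules against Proposition 2 and the satisfaction clauses, and completeness by taking $\Phi$ to be the closure of $\{A,B\}$, extracting a state of $M_{\Phi}$ containing $A$ but not $B$ via the Pair Extension Lemma, and applying the Truth Lemma. The only (harmless) quibble is terminological: the pair you obtain over the universe $\Phi$ is not ``full'' in the paper's sense (which requires the union to be the whole language) but is exactly an element of $S_{\Phi}$, which is all the argument needs.
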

\begin{proof}
Soundness is an easy exercise; completeness follows from the construction of the finite canonical model (for each $A \not\longrightarrow B$, take as $\Phi$ the closure of $\{ A, B \}$ and construct $M_{\Phi}$) and the Truth Lemma \ref{lem: truth lemma} in conjunction with the Pair Extension Lemma \ref{lem: pair extension} (if  $A \not\longrightarrow B$, then there is $a$ in $M_{\Phi}$ such that $A \in a_{in}$ and $B \in a_{out}$; by the Truth Lemma, there is a state in $M_{\Phi}$ satisfying $A$ that does not satisfy $B$).
\end{proof}

\begin{theorem}
The $\mathcal{L}_1$-theory of all Do\v{s}en frames is decidable.
\end{theorem}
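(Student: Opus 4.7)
The plan is to combine the finite model property implicit in Theorem \ref{thm: completeness} with decidability of model checking on finite Do\v{s}en models.

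First, Theorem \ref{thm: completeness} already delivers the finite model property with a computable size bound. If $A \not\longrightarrow B$, then by soundness and completeness $A \vdash B$ is refuted in the finite canonical model $M_\Phi$ where $\Phi$ is the closure of $\{A, B\}$. Since the closure of a finite set is finite and every state of $M_\Phi$ is an independent pair partitioning $\Phi$, we have $|S_\Phi| \leq 2^{|\Phi|}$, a bound computable from $A$ and $B$. Hence $\mathit{IDFNL}$ has the FMP with an effective bound on model size.

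Next, I would show that model checking on a given finite Do\v{s}en model is decidable. Truth sets of subformulas are computed bottom-up; the only non-routine cases are $\LDI$ and $\RDI$. Unpacking the satisfaction clause for $\LDI$ (and using Lemma \ref{lem: subpath} to split a path into its first element and its tail), one verifies that the set $\{ s : s \vDash_M A \LDI B \}$ coincides with the greatest fixed point on $2^S$ of the monotone operator
\[
F(X) = \{ s \in S : \forall x, t\,(Rxst \land x \vDash_M A \Rightarrow t \vDash_M B \land t \in X) \}.
\]
That $\{ s : s \vDash_M A \LDI B\}$ is a fixed point of $F$ is immediate from the clause; maximality follows by induction on path length. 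On a finite model, iterating $F$ from $S$ downwards stabilizes within $|S|$ steps, giving the truth set effectively. The treatment of $\RDI$ is symmetric.

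Decidability now follows by the standard argument. Proofs in $\mathit{IDFNL}$ are recursively enumerable, yielding a semi-decision procedure for $A \longrightarrow B$. Dually, enumerating all finite Do\v{s}en frames of size at most $2^{|\Phi|}$ (of which there are finitely many up to isomorphism) together with valuations of the variables of $A, B$, and testing validity of $A \vdash B$ on each via the model-checking procedure above, semi-decides $A \not\longrightarrow B$; the FMP ensures that this search succeeds whenever the sequent is not derivable. Running both procedures in parallel decides $A \longrightarrow B$. The main obstacle is the model-checking step for the iterative operators, whose clauses quantify over all finite paths; the fixed-point characterization above is what collapses this quantification to a bounded computation. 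A tempting shortcut---constructing $M_\Phi$ directly---would require deciding independence of pairs, which is itself provability in $\mathit{IDFNL}$ and thus circular; enumerating arbitrary finite frames of the bounded size sidesteps this.
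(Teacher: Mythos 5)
Your proof is correct and follows essentially the same route as the paper, whose one-line argument is that the logic is recursively axiomatizable and complete with respect to the recursively enumerable class of finite Do\v{s}en models. Your greatest-fixed-point characterization of the truth set of $A \LDI B$ is a worthwhile supplement: it justifies the step, left implicit in the paper, that satisfaction of the iterative operators---whose clauses quantify over infinitely many finite paths---is actually decidable on a finite model, so that validity checking over the bounded class of models is effective.
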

\begin{proof}
The proof of Theorem \ref{thm: completeness} entails axiomatizability and completeness with respect to a recursively enumerable class of models (finite Do\v{s}en models).
\end{proof}

\section{Conclusion}\label{sec: conclusion}
In this article we have put forward a decidable extension of the Distributive Full Non-associative Lambek Calculus that allows to reason about the effects of iterative composition (of bracketed expressions in Lambek's sense, or more generally of linguistic resources or actions). We have achieved this by extending DFNL with two primitive iterative division operators $\LDI$ and $\RDI$ and by providing a complete axiomatization and establishing the finite model property. (In fact, we have established \emph{bounded} finite model property, from which decidability follows independently of the axiomatization.)

In the rest of this section, we discuss some of our design choices and plans for future work. Firstly, the reader may wonder why we have left out the discussion of an ``iterative product'' operator with respect to which $\LDI$ and $\RDI$ would be residuated. In fact, we suspect that there is not a single such operator; technically speaking, the iterative division operators ``look at'' two different accessibility relations. This will be investigated in more detail in the future. 

Concerning our choice of the language, we note that the presence of $\top, \bot$ is motivated by technical considerations. In our proof of the Truth Lemma for the iterative division operators, the set $Y$ may be empty in which case $\phi_Y$ is not well defined without $\bot$ in the language. In order to obtain the Truth Lemma for $\bot$, we need $\top$. (When it comes to the informal interpretation of these constants, $\top$ can be read as the trivial type of linguistic resource or action---all objects are of this type; and $\bot$ can be read as the inconsistent type---no object is of this type.)

Our work is motivated by the interest of studying iterated composition in non-associative settings. We have presented the basic logic of such settings, but it is interesting  also to look at its extensions, especially those where composition has some ``natural'' properties. Some of these are familiar from the literature on the Lambek Calculus. For instance, it is not hard to show that the extension of $\mathit{IDFNL}$ with the Weak contraction axiom
\[
A \land A \LD B \vdash B
\]
is a sound and weakly complete axiomatization of the theory of Do\v{s}en frames satisfying reflexivity $Rsss$ and that this theory is decidable. This frame condition reflects the idea that composing an object with itself may always result in the object itself. This is again plausible in some cases, but not in general (action composition is a counterexample). 

Similarly, the extension of $\mathit{IDFNL}$ with the Weak commutativity axiom \[ A \PR B \vdash B \PR A \] is a sound and weakly complete axiomatization of the theory of all frames such that $Rstu $ implies $ Rtsu$; this theory is decidable as well. The frame condition is reflecting the idea that the order of composition is immaterial when it comes to the output. (This may be the case for some special cases of composition, but not in general---think of action composition as an example.) 

Some natural assumptions, however, go beyond the limits of the present framework. For instance, if composition is thought of as \emph{information update} and formulas are seen as expressing information \emph{entailed by} states, then it is plausible to assume the Success axiom
\[
A \LD B \vdash A \LD (A \land B)
\]
expressing the notion that ``update with $A$'' results in a state entailing $A$. Many notions of update studied in the epistemic logic literature share this feature (e.g.\ public announcements or belief revision), but it can be shown that the Success axiom is \emph{not canonical} when the present notion of a canonical model is assumed. To repair this, we would have to add a partial information order $\leq$ (as in the Routley--Meyer semantics for relevant logics \cite{Restall2000}) and assume that $Rstu \implies s \leq u$. This can be done, but it would make the semantics more complicated.

There are also other properties of composition not usually studied in the context of the Lambek Calculus (or substructural logics in general), but natural when the present setting is seen as a generalization of various logics of epistemic dynamics. For instance, in Public Announcement Logic it is assumed that if it is possible to compose information $s$ as input with information $t$ (to publicly announce $s$ in the context of $t$), then $s$ has to be ``true'' with respect to $t$. This may be seen as corresponding to $Rstu \implies s \leq t$. For more on frame properties corresponding to various notions of information update (in the setting of normal modal logic), see \cite{Benthem2014,Holliday2012}.

In the future, we would like to take a closer look at the variants of our basic logic discussed above.

\end{document}